\documentclass{llncs}

\usepackage{times}
\usepackage{graphicx}
\usepackage{amssymb}
\usepackage{amsmath}

\usepackage{url}\urlstyle{rm}
\usepackage{xspace}
\newcommand{\numSAT}{\textsc{\#Sat}\xspace}
\newcommand{\ASP}{\textsc{Asp}\xspace}
\newcommand{\SAT}{\textsc{Sat}\xspace}

\newcommand{\QBF}{\textsc{Qbf}\xspace}
\newcommand{\pname}[1]{\textsc{#1}\xspace}

\usepackage{xcolor}

\usepackage{hyperref}
\usepackage{multirow}

\usepackage{etoolbox}
\patchcmd{\thebibliography}{\chapter*}{\section*}{}{}

\usepackage{graphicx}
\graphicspath{{fig/}}
\DeclareGraphicsExtensions{.pdf,.png,.jpg}
\usepackage{subfig}
\usepackage{booktabs}



\newenvironment{changemargin}[2]{%
\list{}{\rightmargin#2\leftmargin#1
\parsep=0pt\topsep=0pt\partopsep=0pt}
\item[]}
{\endlist}

\newenvironment{indented}{\begin{changemargin}{1cm}{0cm}}{\end{changemargin}}




\let\phi\varphi
\let\epsilon\varepsilon

\renewcommand{\models}{\vDash}

\newcommand{\calA}{\mathcal{A}}

\newcommand{\calC}{\mathcal{C}}

\newcommand{\calR}{\mathcal{R}}

\newcommand{\calT}{\mathcal{T}}

\newcommand{\card}[1]{\left|#1\right|}
\newcommand{\CCard}[1]{\|#1\|}

\newcommand{\Nat}{\mathbb{N}} 

\newcommand{\algo}[1]{\ensuremath{\mathsf{#1}}}


\newcommand{\NP}{\ensuremath{\textsc{NP}}}
\newcommand{\co}{\ensuremath{\textsc{co}}}

\newcommand{\bigO}[1]{\ensuremath{{\mathcal O}(#1)}}

\newcommand{\cnt}{\#}


\newcommand{\tw}[1]{\mathit{tw}(#1)}






\newcommand{\mods}[1]{\mathit{models}(#1)}




\newcommand{\nbody}[1]{{\mathit{B}^-(#1)}}
\newcommand{\pbody}[1]{{\mathit{B}^+(#1)}}
\newcommand{\head}[1]{{\mathit{H}(#1)}}





\newcounter{cefalo}

\newcounter{cefalocont}








\def\qed{\hfill{\qedboxempty}      
  \ifdim\lastskip<\medskipamount \removelastskip\penalty55\medskip\fi}

\def\qedboxempty{\vbox{\hrule\hbox{\vrule\kern3pt
                 \vbox{\kern3pt\kern3pt}\kern3pt\vrule}\hrule}}

\def\qedfull{\hfill{\qedboxfull}   
  \ifdim\lastskip<\medskipamount \removelastskip\penalty55\medskip\fi}

\def\qedboxfull{\vrule height 4pt width 4pt depth 0pt}




\hypersetup{pdfinfo={
  Title={Counting Answer Sets via Dynamic Programming}
  Author={Johannes Fichte, Markus Hecher, Michael Morak, Stefan Woltran}
}}

\title{Counting Answer Sets via Dynamic Programming}

\author{Johannes Fichte \and Markus Hecher \and Michael Morak \and Stefan
Woltran}

\institute{TU Wien, Vienna, Austria\\
  lastname@dbai.tuwien.ac.at}

\begin{document}

\maketitle

\begin{abstract}
  While the solution counting problem for propositional satisfiability (\#SAT)
  has received renewed attention in recent years, this research trend has not
  affected other AI solving paradigms like answer set programming (ASP).
  Although ASP solvers are designed to enumerate all solutions, and counting
  can therefore be easily done, the involved materialization of all solutions
  is a clear bottleneck for the counting problem of ASP (\#ASP).
  In this paper we propose dynamic programming-based \#ASP algorithms that
  exploit the structure of the underlying (ground) ASP program. Experimental
  results for a prototype implementation show promise when compared to existing
  solvers.
\end{abstract}

\section{Introduction}\label{sec:introduction}
With the rise of efficient solvers, model counting for the
propositional satisfiability problem (\#SAT)
\cite{siamcomp:Valiant79,aaai:BayardoS97} has received renewed
attention in recent years (cf.\ e.g.,
\cite{ijcai:ChakrabortyFMV15,sat:AzizCMS15}). Knowing the number of
models of a
propositional formula is a useful measurement, and can be used, inter
alia, in the areas of machine learning, probabilistic reasoning, statistics, and
combinatorics \cite{ai:Roth96,aaai:SangBK05,jair:DomshlakH07,aaai:XueCD12}.
Various systems 
have been implemented that
solve the \#SAT problem; see~e.g.,~\cite{sat:Thurley06,sat:SangBBKP04}.

Similar strides in efficiency have also been made in answer set
programming (ASP) regarding the model existence
problem~\cite{iclp:GelfondL88,cacm:BrewkaET11}, where efficient
solvers are now readily
available,~e.g.,~\cite{GebserKaufmannSchaub12a,lpnmr:AlvianoDFLR13}. ASP
is a rule-based language that has found great success as it allows
users to specify intuitive, fully-declarative problem descriptions,
and is used in both industry and research. When using SAT solvers to
evaluate a problem, the problem usually has to be rewritten into a SAT
formula. While such SAT rewritings require a specialized algorithm for
each particular problem, ASP as a rule-based formalism allows for the
declarative specification of problem statements. The actual problem
instance can then simply be given as an input database of ground
facts.

For instance, a simple graph 2-colorability problem can be stated,
using two rules, as follows:
  \begin{eqnarray*}
	\mathit{color}(V, \mathit{red}) \vee \mathit{color}(V, \mathit{blue})
	\leftarrow \mathit{vertex}(V)\\
	\bot \leftarrow \mathit{edge}(V1, V2), \mathit{color}(V1, C),
	\mathit{color}(V2, C)
  \end{eqnarray*}
  Together with a graph, given as a set of facts of the form $\mathit{edge}(x,
  y)$ and $\mathit{vertex}(v)$, each answer set represents exactly one valid
  coloring of the graph. 
%
Evaluating ASP programs like the one 
above is usually a two-step
process. First, a \emph{grounder} instantiates the program, replacing all
variables by domain constants, and then a \emph{solver} evaluates the ground 
program and computes the answer sets.  While for SAT the model existence problem
is \NP-complete, the problem of evaluating ground (disjunctive) ASP programs is
located on the second level of the polynomial hierarchy. Thus, ASP allows for
efficient encodings of problems of higher complexity that typically arise in AI,
like circumscription or diagnosis. 
Opposite to 
standard SAT solvers which simply decide the problem
or deliver one (counter-)model,
ASP systems are tailored to
enumerate all answer sets.
Due to this fact, the answer set counting
problem (\#ASP) has received far less attention than the \#SAT problem. However,
materializing all answer sets can be expensive and is not necessarily required
for counting.

It is the aim of this paper to propose and evaluate a dynamic
programming-based answer set counting algorithm that exploits the
structure of the given (ground) input ASP program and avoids the
costly materialisation of all answer sets.  The importance of
evaluating such parameterized algorithms in practice has recently been
stressed~\cite{paracompnews:Gutin15}.  Several works have shown that
such an approach works well for \#SAT, guaranteeing both favorable
theoretical runtime bounds \cite{dam:FischerMR08,jda:SamerS10}, as
well as good practical performance in prototype implementations
\cite{jair:LiPB11,jair:SaetherTV15}.
Jakl et al.~\cite{ijcai:JaklPW09} and Morak et
al.~\cite{jelia:MorakPRW10} have introduced dynamic programming
algorithms for deciding the answer set existence problem in linear
time in the size of the input, where the enumeration of answer sets
can be done with linear delay. This is accomplished by exploiting the
structure of the given program via tree decompositions of its
incidence graph. A tree decomposition of a graph (roughly) tries to
arrange the graph into a tree by combining cyclic parts of the graph
into single tree nodes. If the size of these tree nodes can be bounded
by a (small) constant, then the problem can be evaluated efficiently
by traversing the tree decomposition in a bottom-up manner, evaluating
the answer set existence problem only locally for each node and
combining the partial solutions.
In this paper, we investigate how this idea can be used for counting without
answer set materialization, and propose additional variations of the algorithm
based on different graph representations of the program.

\paragraph{Contributions.} The main contributions of this paper are:
\begin{enumerate}
  \item We use dynamic programming 
	on tree decompositions
	to solve the answer set counting problem for ASP. Three
	versions of the algorithms are proposed, based in part on different graph
	representation of the input.

  \item We show that the algorithm exhibits favorable theoretical runtime
	bounds.

  \item We provide a prototype implementation of our proposed algorithms, and
	give an experimental performance analysis and evaluation, comparing the
	solver to several existing solutions.
\end{enumerate}

The remainder of the paper is structured as follows: In
Section~\ref{sec:preliminaries}, we give some preliminaries on ASP and tree
decompositions. Section~\ref{sec:results} gives an overview of the general
principles of dynamic programming algorithms on tree decompositions and then
proceeds to give the proposed answer set counting algorithms. 
Finally, a prototype implementation is evaluated via
experiments in Section~\ref{sec:evaluation}. We close with some concluding
remarks in Section~\ref{sec:conclusions}.


\section{Preliminaries}\label{sec:preliminaries}

\paragraph{Answer Set Programs.} A \emph{ground answer set program} (or
program, for short) is a pair $\Pi = (\calA,\calR)$, where $\calA$ is a set of
propositional atoms and $\calR$ is a set of rules of the form:
\begin{equation}\label{eq:rule}
  a_1\vee \cdots \vee a_l \leftarrow a_{l+1}, \ldots, a_m, \neg a_{m+1}, \ldots,
  \neg a_n,
\end{equation}
where $n \geq m \geq l$ and $a_i \in \calA$ for all $1 \leq i \leq n$. A rule
$r \in \calR$ of form~\eqref{eq:rule} consists of a head $\head{r} = \{
a_1,\ldots,a_l \}$ and a body 
given by
$\pbody{r} = \{a_{l+1},\ldots,a_m \}$ and $\nbody{r} = \{a_{m+1},\ldots,a_n \}$.
A set $M \subseteq \calA$ is a called a model of $r$, if $\pbody{r} \subseteq M$
together with $\nbody{r} \cap M = \emptyset$ implies that $\head{r} \cap M \neq
\emptyset$.  We denote the set of models of $r$ by $\mods{r}$ and the models of
a program $\Pi= (\calA,\calR)$ are given by $\mods{\Pi} = \bigcap_{r \in \calR}
\mods{r}$.

The reduct $\Pi^I$ of a program $\Pi$ with respect to a set of atoms $I
\subseteq \calA$ is the program $\Pi^I = \left(\calA,\left\{r^I \mid r \in
\calR, \nbody{r} \cap I = \emptyset)\right\}\right)$, where the reduct $r^I$ of
a rule $r$ is the same rule without negative body, i.e., $\head{r^I} =
\head{r}$, $\pbody{r^I} = \pbody{r}$, and $\nbody{r^I} = \emptyset$.
Following~\cite{iclp:GelfondL88}, $M \subseteq \calA$ is an \emph{answer set} of a
program $\Pi = (\calA,\calR)$ if $M \in \mods{\Pi}$ and for no $N \subsetneq M$,
we have $N \in \mods{\Pi^M}$. The \emph{consistency problem} of ASP (decide
whether there exists an answer set for a given program $\Pi$) is
$\Sigma^P_2$-complete~\cite{amai:EiterG95}.

\paragraph{Fixed-Parameter Tractability.} We briefly recall the basic
notions of fixed-parameter tractability. For more detailed information we
refer to other sources, e.g.\ \cite{book:DowneyF13}. A \emph{parameterized
problem}~$L$ is a subset of~$\Sigma^* \times \Nat$ for some finite alphabet
$\Sigma$. For an instance~$(I,k) \in \Sigma^* \times \Nat$ we call $I$ the
\emph{main part} and $k$ the \emph{parameter}. A problem $L$ is
\emph{fixed-parameter tractable (FPT)} if there exists a computable function $f$
and a constant $c$, such that there exists an algorithm that decides whether
$(I,k) \in L$ in time~$\bigO{f(k) \cdot \CCard{I}^c}$ where $\CCard{I}$ denotes
the size of~$I$. Such an algorithm is called an \emph{fpt-algorithm}. A well
studied parameter is the so-called treewidth.

\paragraph{Tree Decomposition and Treewidth.} A \emph{tree decomposition} of a
graph $G = (V,E)$ is a pair $\calT = (T, \chi)$, where $T$ is a rooted tree and
$\chi$ is a labelling function with $\chi(t) \subseteq V$---we call $\chi(t)$
the \emph{bag} of $t$---such that the following holds: (i) for each $v \in V$,
there exists a $t \in T$, such that $v \in \chi(t)$; (ii) for each $\{v,w\} \in
E$, there exists a $t \in T$, such that $\{v, w\} \subseteq \chi(t)$; and (iii)
for each $r, s, t\in T$, such that $s$ lies on the path from $r$ to~$t$, we have
$\chi(r) \cap \chi(t) \subseteq \chi(s)$.  The \emph{width} of a tree
decomposition is defined as the cardinality of its largest bag minus one. The
\emph{treewidth} of a graph $G$, denoted by $\tw{G}$, is the minimum width over
all tree decompositions of~$G$. For arbitrary but fixed $w \geq 1$, it is
feasible in linear time to decide if a graph has treewidth $\leq w$ and, if so,
to compute a tree decomposition of width $w$.

A tree decomposition $(T,\chi)$ is called \emph{normalized} (or \emph{nice})
\cite{book:Kloks94}, if (i) each $t \in T$ has at most two children; (ii) for
each $t \in T$ with two children $t'$ and $t''$, it holds that $\chi(t) =
\chi(t') = \chi(t'')$; and (iii) for each $t \in T$ with exactly one child $t'$,
the bags~$\chi(t)$ and $\chi(t')$ differ in exactly one element. 
Every tree decomposition can be normalized
in linear time without increasing its width \cite{book:Kloks94}. We 
assume
w.l.o.g.\ that normalized tree decompositions have root and leaf nodes whose
bags are empty.

\paragraph{Dynamic Programming on Tree Decompositions.} Dynamic programming is
an established technique in the toolkit of parameterized
complexity~\cite{book:Niedermeier06}. Especially, for problems parameterized by
treewidth, dynamic programming algorithms on tree decompositions have been
applied to many graph problems~\cite{mfcs:Bodlaender97}, as well as problems in
Logic and Artificial Intelligence~\cite{lpnmr:PichlerRW09}. Dynamic algorithms
on tree decompositions all share a common structure. The tree decomposition is
traversed from the leaf nodes to the root node. At each node a subproblem is
solved which consists of the part of the problem instance that is induced by the
content of the bag of the current node. This results in a set of partial
solutions (called \emph{tuples}) which is propagated from the child nodes to the
parent node.  From these, the parent node then calculates the partial solutions
its induced subproblem. Finally, at the root node there is a correspondence
between the partial solutions of the root node and the solutions of the whole
problem instance. An appropriate data structure to represent the partial
solutions must be devised: this data structure must contain sufficient
information to compute the representation of the partial solutions at each node
from the corresponding representation at the child node(s). In addition, to
ensure efficiency, the size of the data structure should only depend on the size
of the bag (and not on the size of the entire problem instance).

\paragraph{Tree Decompositions of Logic Programs.} To build tree decompositions
for ground answer set programs $\Pi = (\calA,\calR)$, we use two types of
graph representations:
(a) The \emph{incidence graph} $G_{\mathit{inc}}(\Pi)$ of
$\Pi$ is an undirected, bipartite graph $(\calA \cup \calR, E)$, where $E$
contains an edge $(a, r)$, iff atom $a$ occurs in rule $r$ of $\Pi$; and
(b) the \emph{primal graph} $G_{\mathit{prim}}(\Pi)$ of $\Pi$
is an undirected graph $(\calA, E)$, where $E$ contains an edge $(a_i, a_j)$,
iff there exists an $r \in \calR$, such that both $a_i$ and $a_j$ appear in $r$.
A tree decomposition of such a graph representation of a program $\Pi$ is called
a tree decomposition of $\Pi$.
The treewidth of an incidence graph of a program is at most the
treewidth of its primal graph plus 1.

For normalized tree decompositions of programs, we can distinguish between six
types of nodes: \emph{leaf} (LEAF), \emph{join} (JOIN), \emph{atom introduction}
(AI), \emph{atom removal} (AR), \emph{rule introduction} (RI), and \emph{rule
removal} (RR) node. A node is a leaf node if it has no child node. A node is a
join node if it has two child nodes. A node is an atom introduction node if its
bag contains one additional atom compared to the bag of its child node.
Similarly, a node is an atom removal node, if its bag contains one less atom
compared to the bag of its child node. Rule introduction and rule removal nodes
are defined analogously to AI and AR nodes, where the difference between bags is
an added or removed rule instead of an atom. The last four types, AI, AR, RI,
and RR, will often be augmented with the element $e$ (either an atom or a rule)
which is removed or added compared to the bag of the child node. For example
AI($a$) denotes that atom $a$ is added and RR($r$) denotes that rule $r$ is
removed. Notice that primal graph tree decompositions cannot contain RI and RR
nodes.

When we refer to tree decompositions of programs, we will use the
following notation: Let $\calT = (T, \chi)$ be a tree decomposition of
program $\Pi = (\calA, \calR)$ and let $t \in T$ be a node. We denote
the atoms occurring in~$\chi(t)$ with $A_t$ and the rules occurring
in~$\chi(t)$ with $R_t$. These notions naturally extend from nodes $t$
to subtrees $T_t$ of $T$ (rooted at~$t$). We say that a set of atoms
$M$ entails a rule $r$ w.r.t.\ a tree node~$t$, denoted
$M \models_t r$, iff $M \in \mods{r|_t}$, where $r|_t$ is the rule
obtained from $r$ by removing all literals formed by atoms that are
not contained in $\chi(t)$. For example, a rule where all literals are
removed thus becomes $\bot \leftarrow \top$, that is, a contradiction.

\paragraph{Counting Complexity.}

Investigating the complexity of counting problems was initiated in
\cite{siamcomp:Valiant79}. Formally, a
\emph{counting problem} is presented using a \emph{witness} function which for
every input~$x$ returns a set of \emph{witnesses} for~$x$.  A \emph{witness}
function is a function $w\colon \Sigma^* \rightarrow \Gamma^{*,<\infty}$,
where~$\Sigma$ and~$\Gamma$ are two alphabets; $\Gamma^{*,<\infty}$ are all
finite 
strings
in~$\Gamma^*$. A \emph{counting problem} is then defined as
follows: given $x\in \Sigma^*$, find the cardinality $\card{w(x)}$.
For a standard complexity class $\calC$, define $\cnt\calC$ as the class of all
counting problems whose witness function~$w$ satisfies (a) there is a polynomial
$p(n)$ such that for every $x \in \Sigma^*$ and every $y \in w(x)$ we have
$\card{y} \leq p(\card{x})$; and (b) the problem ``given~$x$ and~$y$, is $y \in
w(x)$?'' is in $\calC$; see also \cite{sigact:HemaspaandraV95}. 


\section{Counting Answer Sets}\label{sec:results}

In this section, we introduce our \#ASP dynamic programming
algorithms. Before doing so, 
we 
give a brief complexity-theoretic discussion of the \#ASP
problem. 

\subsubsection{Complexity of \#ASP}

Before introducing the actual algorithms, we provide the following
straight-forward result that shows that (under standard complexity-theoretic
assumptions) the counting problem we consider here is strictly harder that
\#SAT. In fact the following result is a corollary of existing results that deal
with the complexity of evaluating logic programs and the complexity of counting
subset-minimal models of CNF formulas:

\begin{theorem}
  The counting problem \#ASP is 
	\cnt\co\NP-complete
\end{theorem}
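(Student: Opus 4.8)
The plan is to prove the two directions separately: membership in $\cnt\co\NP$ from the complexity of answer-set checking, and $\cnt\co\NP$-hardness by a parsimonious reduction from the problem of counting the subset-minimal models of a CNF formula.

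For membership I would use the witness function that maps a program $\Pi = (\calA, \calR)$ to its set of answer sets. Condition (a) from the definition of $\cnt\calC$ is immediate, since every answer set is a subset of $\calA$ and hence has size at most $\card{\calA}$. For condition (b) I must argue that deciding ``is $M$ an answer set of $\Pi$?'' lies in $\co\NP$. Testing $M \in \mods{\Pi}$ reduces to checking $M \in \mods{r}$ for each $r \in \calR$, which is polynomial. The remaining minimality condition---that no $N \subsetneq M$ satisfies $N \in \mods{\Pi^M}$---has an $\NP$ complement (guess such an $N$ and verify $N \in \mods{\Pi^M}$ in polynomial time), so it is itself a $\co\NP$ test; the conjunction of a polynomial-time test and a $\co\NP$ test stays in $\co\NP$. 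This yields that \#ASP lies in $\cnt\co\NP$, and it is worth noting that this matches the fact that answer-set checking is $\co\NP$-complete even though the consistency problem sits one level higher, at $\Sigma^P_2$.

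For hardness I would reduce from counting the subset-minimal models of a CNF formula, which is known to be $\cnt\co\NP$-complete. Given a formula $\phi$ over variables $\calA$, I translate each clause $a_1 \vee \cdots \vee a_p \vee \neg b_1 \vee \cdots \vee \neg b_q$ into the negation-free disjunctive rule $a_1 \vee \cdots \vee a_p \leftarrow b_1, \ldots, b_q$, producing a program $\Pi_\phi = (\calA, \calR)$. By the definition of $\mods{r}$, a set $M \subseteq \calA$ satisfies this rule exactly when the corresponding clause is satisfied, so $\mods{\Pi_\phi}$ is precisely the set of models of $\phi$. Since every rule has $\nbody{r} = \emptyset$, we have $\Pi_\phi^M = \Pi_\phi$ for all $M$, whence $M$ is an answer set of $\Pi_\phi$ if and only if $M$ is a subset-minimal model of $\phi$. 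As this correspondence is the identity on subsets of $\calA$, it is a bijection, so the reduction is parsimonious and transfers the count, giving $\cnt\co\NP$-hardness.

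The main obstacle is to pin the complexity down to exactly the first level rather than to the second. On the upper-bound side this means being careful that witness checking asks only a single $\co\NP$ question (the minimality test), and does not re-incur the $\Sigma^P_2$ cost of the consistency problem, which additionally guesses $M$. On the lower-bound side the delicate point is to make sure the clause-to-rule translation is genuinely parsimonious, so that every minimal model of $\phi$ corresponds to exactly one answer set of $\Pi_\phi$ and vice versa; this is where the restriction to negation-free programs, forcing $\Pi_\phi^M = \Pi_\phi$, does the real work. The only external ingredient is the precise statement that counting subset-minimal models of CNF formulas is $\cnt\co\NP$-complete, which I would invoke as a cited result.
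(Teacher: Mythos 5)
Your proof is correct and follows essentially the same route as the paper's: membership via the \co\NP{} upper bound for answer-set checking (cf.~\cite{KochL99}), and hardness via the observation that answer sets of negation-free programs coincide with subset-minimal models of CNF formulas, whose counting problem is \cnt\co\NP-hard by~\cite{DurandHK05}. You merely make explicit what the paper leaves implicit, namely the exact clause-to-rule translation and the direct argument that answer-set checking is in \co\NP.
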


\begin{proof}
  Membership follows from the fact that, given a program $P$ and an
  interpretation~$I$, checking whether $I$ is an answer of $P$ is
  \co\NP-complete, see e.g.\ \cite{KochL99}. Hardness is  a direct consequence of
  \cnt\co\NP-hardness for the problem of counting subset minimal models of a CNF
  formula \cite{DurandHK05}, since answer sets of negation-free programs and
  subset-minimal models of CNF formulas are essentially the same objects.
\end{proof}

We note that the counting complexity of ASP programs including optimization
statements (i.e., where only optimal answer sets are counted w.r.t.\ a cost
function) is slightly higher; exact results can be established 
employing hardness results from~\cite{HermannP09}.

\subsubsection{Counting Algorithms for ASP}

In order to simplify the presentation, we start by giving only the
decision version of the algorithms, and extend them to counting
algorithms later. We assume that, as input, the algorithm is given an
answer set program $\Pi$, and a normalized tree decomposition $\calT$
of $\Pi$. At each node $t$ of the tree decomposition, the algorithm
will compute a set $\tau_t$ of tuples that represent the partial
solutions. Given a node $t$, we denote by $t'$ its first child, and by
$t''$ its second child. Thus, a full specification of the algorithm is
given by describing how the set $\tau_t$ is derived from the sets
$\tau_{t'}$ and $\tau_{t''}$.

We next present three dynamic programming algorithms. The first
algorithm, \textsf{INC}, works on a tree decomposition of the incidence graph
$G_\mathit{inc}(\Pi)$. The other two algorithms, \textsf{PRIM} and
\textsf{INVPRIM}, use the primal graph $G_\mathit{prim}(\Pi)$.

\paragraph{\textsf{INC} Algorithm.} The first algorithm, based on the
incidence graph, is given in Figure~\ref{fig:inc}. It contains a
specification of how, for each node type of the tree decomposition,
the set~$\tau_t$ at tree node $t$ can be derived from the
set~$\tau_{t'}$ and $\tau_{t''}$ of its child nodes. A tuple in such a
set is a triple $\langle M, S, C \rangle$, where $M$ represents a
truth assignment of the atoms in~$\chi(t)$, $S$ is a set of rules in
$\chi(t)$ which are already satisfied, and $C$ is a set of
certificates, that is, pairs $(A, R)$ of sets of atoms and rules,
where $A \subset M$ represents a potential counter-model with respect
to~$\Pi^M$. The existence of such a tuple in~$\tau_t$ witnesses the
existence of a partial answer set for the program induced by the
subtree rooted at $t$. Note that the decision version of the
\textsf{INC} algorithm below restates the algorithm
from~\cite{ijcai:JaklPW09}.

Intuitively, the algorithm works as follows: In an atom introduction node $t$
for atom~$a$, partial solutions of child nodes are extended in two ways: $a$ is
either set to true and added to the set~$M$ in the tuple, or set to false and
not added to~$M$. For each rule in~$\chi(t)$, we verify whether the rule is
satisfied by the choice on $a$. Also, all possible subsets of the new set $M$
are considered as certificates. By connectedness of tree decompositions, once a
rule removal node removes a rule $r$, we can discard all tuples where $r$ is not
yet satisfied, since $r$ will never appear again in an ancestor node. In join
nodes, on agreeing sets $M$, the satisfied rules of the left child tuples and
right child tuples are merged, since all of them are already satisfied. The
certificates are updated in a similar manner. Continuing all the way up to the
(empty) root node, this process guarantees that a surviving tuple witnesses that
there is an assignment $M$ to all atoms of $\Pi$ that satisfies all the rules.
Further, if the set of certificates of such a tuple is empty, then this
witnesses that there does not exist a subset of $M$ that is a model of the
reduct $\Pi^M$. For each node type of the tree decomposition, the construction
of the set of tuples $\tau_t$ of node $t$ from the tuple sets $\tau_{t'}$ and
$\tau_{t''}$ of its child nodes $t'$ and $t''$, which embodies the intuitive
idea above, is given above. Note that we specify the set $\tau_t$ directly.
Checking whether there exists an answer set is equivalent to checking whether,
after a bottom-up traversal of the tree decomposition, the root node contains
the tuple $\langle \emptyset, \emptyset, \emptyset \rangle$.

\begin{figure}
\small
\begin{eqnarray*}
  AI(a): & \{ \langle & M, S \cup \{ r \mid r \in R_t, M \models_t r \},\\
  && \{ (A, R \cup \{ r \mid r \in R_t, A \models_t r^M \}) \mid (A, R) \in
  C \} \rangle\\
  & \mid & \langle M, S, C \rangle \in \tau_{t'} \} \cup\\
  & \{ \langle & M, S \cup \{ r \mid r \in R_t, M \models_t r \},\\
  && \{ (A, R \cup \{ r \mid r \in R_t, A \models_t r^M \}) \mid (A, R) \in
  C \} \cup\\
  && \{ (A, R \cup \{ r \mid r \in R_t, A \models_t r^M \})\\
  && \hspace{2.5em}\mid (A', R) \in C, A = A' \cup \{ a \} \} \cup\\
  && \{ (M', \{ r \mid r \in R_t, M' \models_t r^M \}) \} \rangle\\
  & \mid & \langle M', S, C \rangle \in \tau_{t'}, M = M' \cup \{ a \} \}\\
  \\[-2ex]
  AR(a): & \{ \langle & M \setminus \{ a \}, S,\\
  && \{ (A \setminus \{ a \}, R) \mid (A, R) \in C \} \rangle\\
  & \mid & \langle M, S, C \rangle \in \tau_{t'} \}\\
  \\[-2ex]
  RI(r): & \{ \langle & M, S \cup \{ s \mid s = r, M \models_t s \} \\
  && \{ (A, R \cup \{ s \mid s = r, A \models_t s^M \}) \mid (A, R) \in C
  \} \rangle\\
  & \mid & \langle M, S, C \rangle \in \tau_{t'} \}\\
  \\[-2ex]
  RR(r): & \{ \langle & M, S \setminus \{ r \}\\
  && \{ (A, R \setminus \{ r \}) \mid (A, R) \in C, r \in R \} \rangle\\
  & \mid & \langle M, S, C \rangle \in \tau_{t'}, r \in S \}\\
  \\[-2ex]
  JOIN: & \{ \langle & M, S' \cup S''\\
  && \{ (A, R' \cup R'') \mid (A, R') \in C', (A, R'') \in C''\} \cup\\
  && \{ (A, R \cup S'') \mid (A, R) \in C', A = M \} \cup\\
  && \{ (A, R \cup S') \mid (A, R) \in C'', A = M \} \rangle\\
  & \mid & \langle M, S', C' \rangle \in \tau_{t'}, \langle M,
  S'', C'' \rangle \in \tau_{t''} \}\\
  \\[-2ex]
  LEAF: && \{ \langle \emptyset, \emptyset, \emptyset \rangle \}
\end{eqnarray*}
\normalsize
\vspace{-0.4cm}
\caption{\textsf{INC} Algorithm.}
\label{fig:inc}
\end{figure}

\paragraph{\textsf{PRIM} and \textsf{INVPRIM} Algorithms.} A simplification of
the above \textsf{INC} algorithm can be achieved if one considers the primal
graph $G_\mathit{prim}(\Pi)$. Recall that by definition, a tree decomposition
must contain each edge of the original graph in some node bag. Since the primal
graph contains a clique between all atoms that participate in a rule $r$, it
follows from the connectedness condition of a tree decomposition that there will
be at least one node whose bag contains all the atoms of rule $r$. 
We denote by $\hat{R}_t$ all rules induced in that way by the bag $\chi(t)$ of node $t$.
Thus, rule
satisfaction can be checked immediately, and separate sets to keep track of
satisfied rules are no longer needed in the tuple structure. For a given tree
node $t$, the set $\tau_t$ thus contains simplified tuples of the form $\langle
M, C \rangle$, where $M$ is the same as for \textsf{INC}, and $C$ is a set of
sets $A \subset M$, that are again the same as for \textsf{INC}. Otherwise, the
intuition of the algorithm below is similar to the one of \textsf{INC}. If
primal and incidence graph have similar treewidth, the \textsf{PRIM} algorithm
should benefit from this simplified logic. The \textsf{PRIM} algorithm is given
in Figure~\ref{fig:prim}.
Again, checking whether an answer set exists is equivalent to checking whether
the tuple $\langle \emptyset, \emptyset \rangle$ exists at the root node of the
tree decomposition.

\begin{figure}
\small
\begin{eqnarray*}
  AI(a): & \{ \langle & M,\\
  && \{ A \mid A \in C, \forall r \in \hat{R}_t: A \models_t r^M \}\rangle,\\
  & \mid & \langle M, C \rangle \in \tau_{t'}, \forall r \in \hat{R}_t: M
  \models_t r\} \cup\\
  & \{ \langle & M,\\
  && \{ A \mid A \in C, \forall r \in \hat{R}_t: A \models_t r^M \} \cup\\
  && \{ A \cup \{ a \} \mid A \in C, \forall r \in \hat{R}_t: A \cup \{ a \} \models_t
  r^M \} \cup\\
  && \{ M'' \mid M'' = M', \forall r \in \hat{R}_t: M'' \models_t r^M \}\rangle\\
  & \mid & \langle M', C \rangle \in \tau_{t'}, M = M' \cup \{ a \} \}\\
  \\[-2ex]
  AR(a): & \{ \langle & M \setminus \{ a \}, \{ A \setminus \{ a \} \mid A \in C
  \}\rangle\\
  & \mid & \langle M, C \rangle \in \tau_{t'} \}\\
  \\[-2ex]
  JOIN: & \{ \langle & M, (C' \cap C'') \cup (\{M\} \cap C'') \cup (C' \cap \{M\}) \rangle\\
  & \mid & \langle M, C' \rangle \in \tau_{t'}, \langle M, C'' \rangle \in
  \tau_{t''} \}\\
  \\[-2ex]
  LEAF: && \{ \langle \emptyset, \emptyset \rangle \}
\end{eqnarray*}
\normalsize
\vspace{-0.4cm}
\caption{\textsf{PRIM} Algorithm.}
\label{fig:prim}
\end{figure}

The idea underlying the \textsf{INVPRIM} algorithm is to save in each tuple,
instead of the set $C$ the set $\overline{C}$ of \emph{inverse} certificates,
i.e.\ sets $A \subset M$ that are surely not counter-models w.r.t.\ the reduct
$\Pi^M$. We leave the straightforward adaptation of the above \textsf{PRIM}
algorithm to the interested reader.

\paragraph{Counting.} The three algorithms presented above, as-is, do not allow
for model counting. However, a simple modification allows this. To this end,
associate with each tuple $\overline{t}$ a number $n(\overline{t})$. For tuples
in leaf nodes, set this number to $1$. For tuples in join nodes, let
$n(\overline{t}) = n(\overline{t'}) \cdot n(\overline{t''})$, where $t'$ ($t''$)
is the left (right) child's tuple that gave rise to $\overline{t}$. For
introduction and removal nodes, let $f$ be the surjective function that maps a
child tuple $\overline{t'}$ to tuple $\overline{t}$, according to the algorithms
given above. Then, let $n(\overline{t}) = \mathit{sum}_{\overline{t'} \in
f^{-1}(\overline{t})} n(\overline{t'})$ (i.e.\ if two tuples map to the same
tuple, their counts are summed up).


\paragraph{Correctness and Runtime.} The correctness proof of these algorithms
is rather tedious, as each node type needs to be investigated separately.
However, it is not difficult to see that a tuple at a node $t$ guarantees that
there exists a model for the ASP sub-program induced by the subtree rooted at
$t$, proving soundness. Conversely, it can be shown that each candidate answer
set is indeed evaluated while traversing the tree decomposition, which proves
completeness. Regarding the theoretical runtime bounds, the algorithms all work
in time $O(2^{2^w} \cdot n)$, where $w$ is the width of the underlying tree
decomposition, and $n$ is the size of $\Pi$.

An interesting observation is that, by dropping all the logic concerning the
certificates from the above algorithms, one obtains a pure satisfiability
checking algorithm, similar to those proposed in~\cite{jda:SamerS10}.


\newcommand\abs[1]{\left|#1\right|}%
\newcommand{\SB}{\{\,}%
\newcommand{\SM}{\,{:}\,}%
\newcommand{\SE}{\,\}}%
\def\hy{\hbox{-}\nobreak\hskip0pt} 
\def\hyph{-\penalty0\hskip0pt\relax}
\newcommand{\solver}[1]{\mbox{\text{#1}}\xspace}
\newcommand{\depqbf}{\solver{DepQBF}}
\newcommand{\depqbfz}{\solver{DepQBF0}}
\newcommand{\sharpsat}{\solver{SharpSAT}}
\newcommand{\dynasp}[1]{\ensuremath{\solver{DynASP}(#1)}}

\section{Experimental Evaluation}\label{sec:evaluation}
We performed experiments to evaluate the efficiency of our approach
and its various algorithm configurations (\algo{PRIM}, \algo{INVPRIM},
\algo{INC}) on programs where we can heuristically find a
decomposition of small width reasonably fast. In fact, programs of
small width exist in practice as real-world graphs often admit tree
decompositions of small width. 
Further, we compared our approach with a modern \ASP solver, recent
\numSAT solvers, and a \QBF solver. The solvers tested include our own
prototype implementation, which we refer to as DynASP, and the existing solvers
Cachet~1.21~\cite{sat:SangBBKP04} (a SAT model counter), \depqbfz\footnote{Since
  \depqbf~\cite{LonsingBiere10} does not support counting by default,
  we implemented a naive counting approach into \depqbf using methods
  described in~\cite{Lonsing15}, which we call \depqbfz.} (a QBF solver),
Clasp~3.1.4~\cite{GebserKaufmannSchaub12a} (an ASP solver),
and \sharpsat~12.08~\cite{sat:Thurley06} (a SAT model counter).
%

\begin{figure*}
\centering
\includegraphics[scale=0.39]{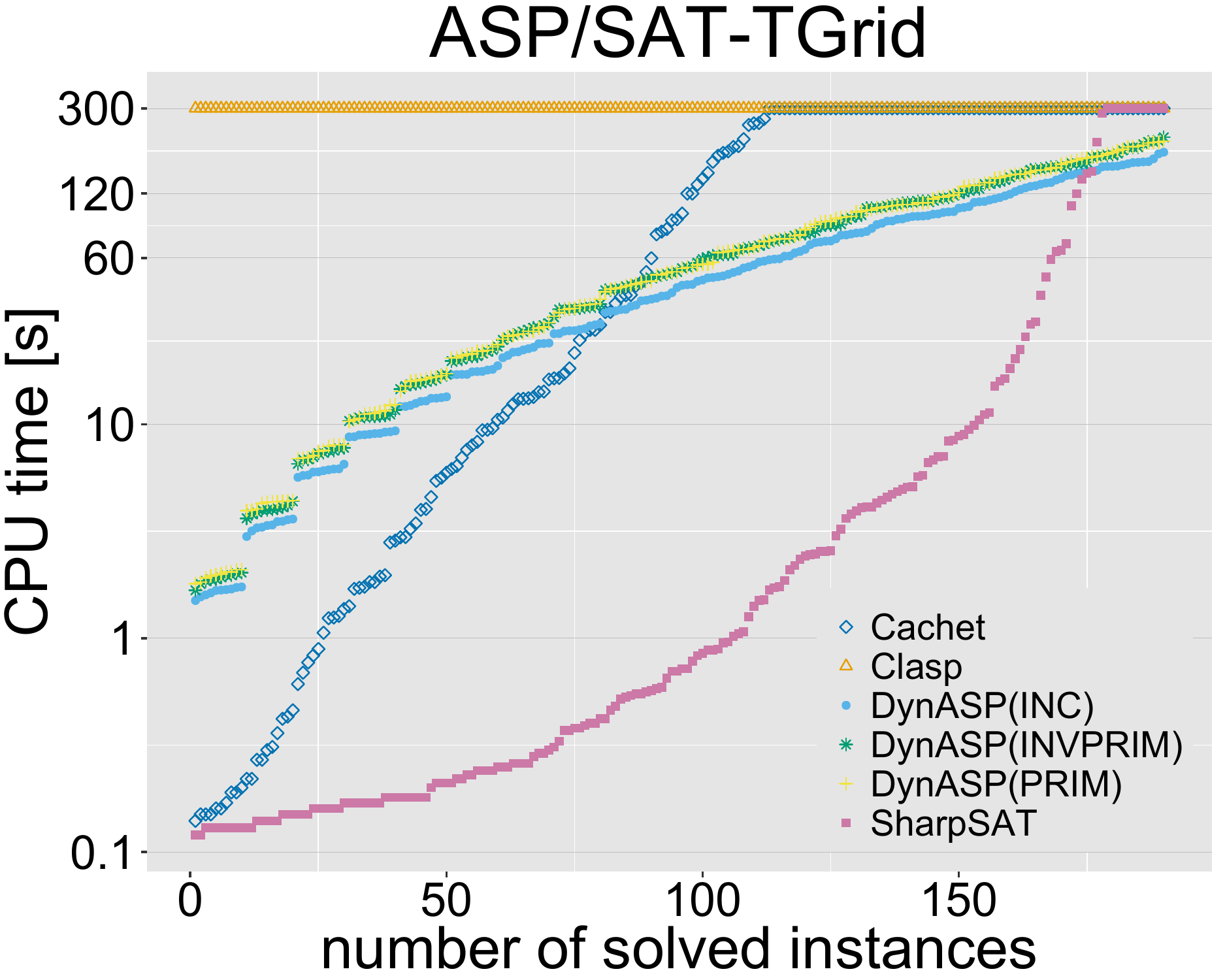}\\[1ex]
\includegraphics[scale=0.39]{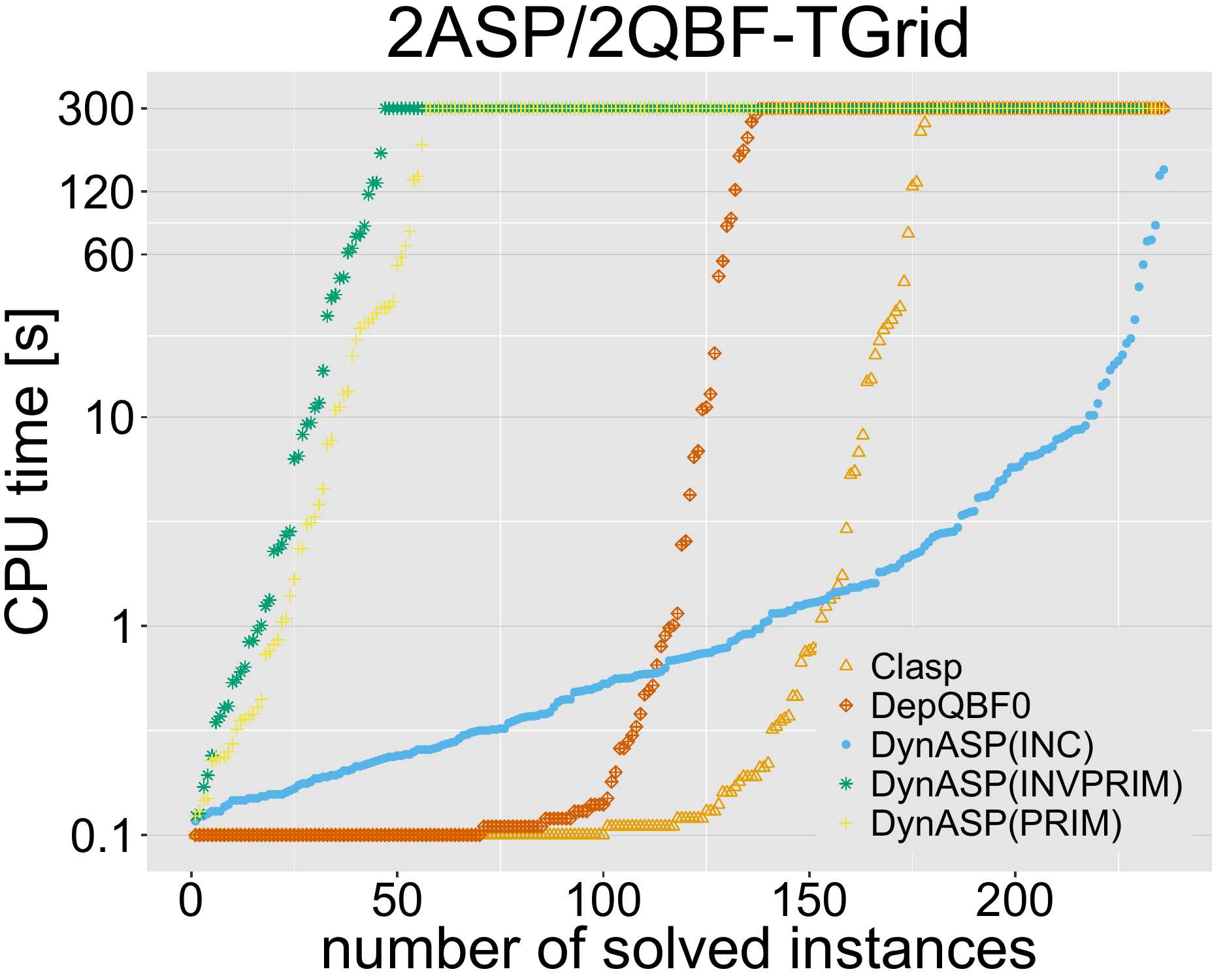}\\[1ex]
\includegraphics[scale=0.39]{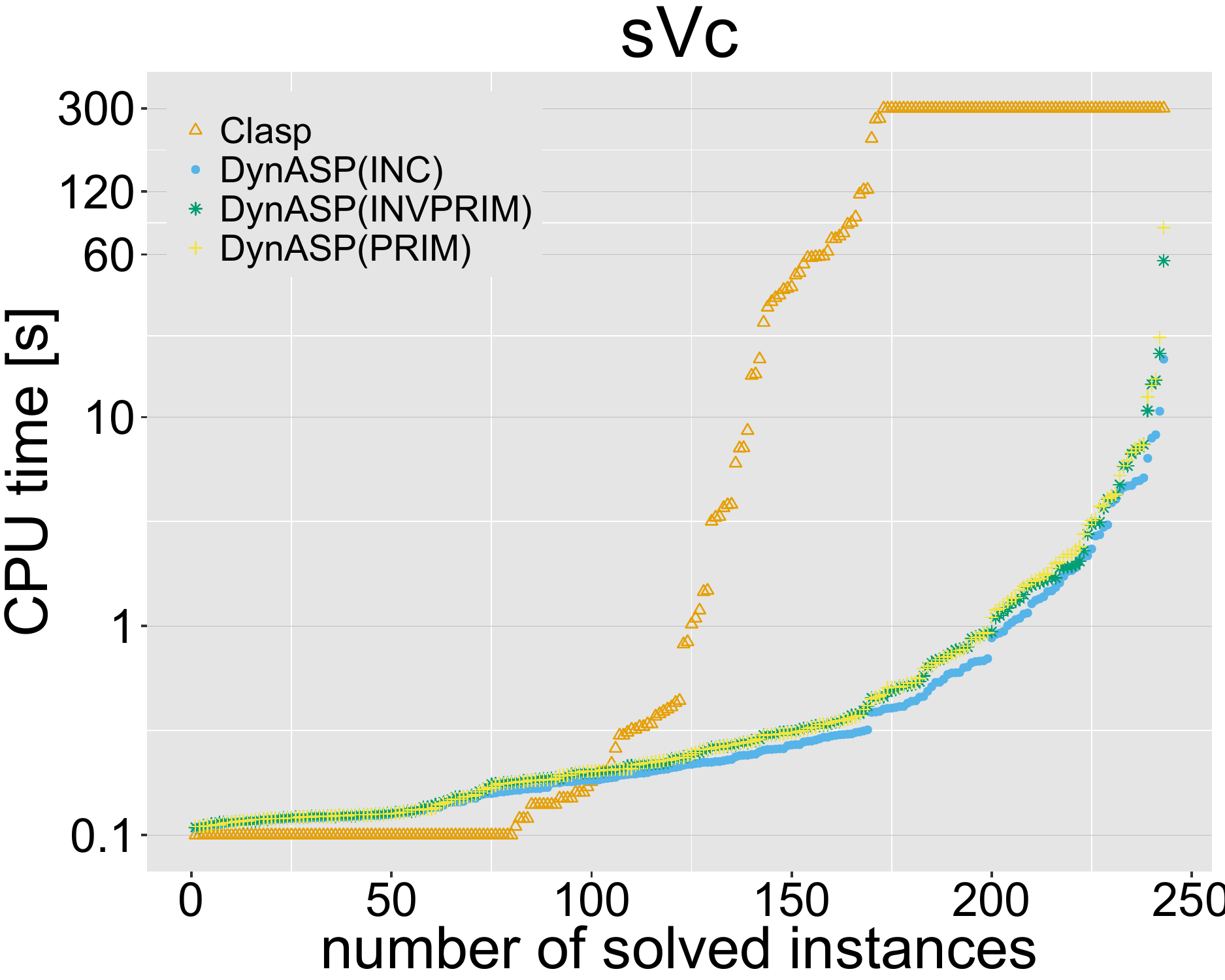}

\caption{Visualization of benchmark results of randomly generated
  instance sets and a selected real-world instance set. 
}
  \label{fig:random}
\end{figure*}

We used both random and structured instances for benchmark
sets, of which we give a description below. The random instances (\pname{Sat-TGrid},
\pname{2QBF-TGrid}, \pname{ASP-TGrid}, \pname{2ASP-TGrid}) were
designed to have a high number of variables and solutions, but with
certain probability a treewidth larger than some fixed~$k$. Therefore,
let $k$ and $\ell$ be some positive integers and $p$ a rational number
such that $0<p\leq 1$.
An instance~$F$ of \pname{Sat-TGrid$(k,l,p)$} consists of the
set~$V=\SB (1,1),\ldots, (1,\ell),(2,\ell),\ldots, (k,\ell) \SE$ of
variables and with probability~$p$ for each variable~$(i,j)$ such that
$1<i\leq k$ and $1<j\leq \ell$ a clause $s_1(i,j)$, $s_2(i - 1,j)$,
$s_3(i,j - 1)$, a clause $s_4(i,j)$, $s_5(i - 1,j)$, $s_6(i-1,j - 1)$,
and a clause $s_7(i,j)$, $s_8(i - 1,j - 1)$, $s_9(i,j - 1)$ where
$s_i \in \{-,+\}$ is selected with probability one half. In that way,
such an instance has an underlying dependency graph that consists of
various triangles forming for probability~$p=1$ a graph that has a
grid as subgraph.  Let $q$ be a rational number such that $0<q\leq
1$. An instance of the set~\pname{2Qbf\hy TGrid$(k,l,p,q)$} is of the
form~$\exists V_1. \forall V_2. F$ where a variable belongs to~$V_1$
with probability~$q$ and to $V_2$ otherwise. Instances of the
sets~\pname{ASP-TGrid} or \pname{2ASP-TGrid} have been constructed in
a similar way, however, as an \ASP program instead of a formula. Note
that the number of answer sets and the number of satisfiable
assignments correspond.
We fixed the parameters to $p=0.85$, $k=3$, and
$l\in \{40,80,\ldots,400\}$ to obtain instances that have with high
probability a small fixed width, a high number of variables and
solutions. Further, we took fixed random seeds and generated 10
instances to ensure a certain randomness.
%
The structured instances model various graph problems (\pname{2Col},
\pname{3Col}, 
\pname{Ds}, \pname{St} \pname{cVc}, \pname{sVc}) on real world mass
transit graphs of 82 cities, metropolitan areas, or countries (e.g.,
Beijing, Berlin, Shanghai, and Singapore).
The graphs have been extracted from publicly available mass transit
data feeds~\cite{gtfs} using gtfs2graphs~\cite{Fichte16c} and split
by transportation type,~e.g., train, metro, tram, combinations. We
heuristically computed tree decompositions \cite{Dermaku} and obtained
relatively fast decompositions of small width unless detailed bus
networks were present.
%
%
The encoding for \pname{2Col} counts all minimal sets~$S$ of
vertices such that there are two sets~$F$ and $S$ where no two
neighboring vertices~$v$ and $w$ belong to~$F$ for a given input
graph. The encoding for \pname{3Col} models to count all 3-colorings.
The encoding for \pname{Ds} models to count all minimal dominating
sets. The encoding for \pname{St} models to count all Steiner trees.
The encoding for \pname{cVc} asks to count all minimal vertex
covers. The encoding for \pname{sVc} models to count all
subset-minimal vertex covers.
%
%

We ran the experiments on an Ubuntu 12.04 Linux cluster of 3 nodes
with two AMD Opteron 6176 SE CPUs of 12 physical cores each at 2.3Ghz
clock speed and 128GB RAM. Input instances were given to the solvers
via a shared memory file system.
During a run we limited the available memory to 4GB RAM and the CPU
time to 
300 seconds.
We used default options for cachet and \sharpsat,
option \mbox{``--qdc''} for \depqbfz, option ``--stats=2
--opt-mode=optN -n 0 --opt-strategy=bb/usc -q'' for clasp, 
and will refer to the different variants of our prototype implementation as
\dynasp{\algo{PRIM}}, \dynasp{\algo{INVPRIM}} and \dynasp{\algo{INC}}.
Since we cannot expect to solve instances of high treewidth
efficiently, we restricted the instances such that we were able to
heuristically find a decomposition of width smaller than 20 within 60
seconds. 

In order to draw conclusions about the efficiency of our approach, we
mainly inspected the (total cpu) running time\footnote{The runtime for
  \dynasp{\cdot} includes decomposition times. Note that we randomly
  generated three fixed seeds for the decomposition computation
  to allow a certain variance in decomposition
  features~\cite{AbseherEtAl15a}. When evaluating the results, we
  constructed the average on the seeds per instance.
} and number of timeouts on the random and structured benchmark
sets. %
Figure~\ref{fig:random} illustrates the running times of the solvers
on the various random instance sets and a selected structured
instance set as a cactus plot. Table~\ref{tab:real_world} reports on
the average running times, number of solved instances, and number of
timeouts of the solvers on the considered structured instance sets.
%

%
%
%
\pname{SAT-TGrid} and \pname{Asp-TGrid}:
Cachet solved 125 instances.
%
Clasp always timed out for both configurations (branch and bound; and
unsatisfiable core). A reason could be the high number of solutions as
Clasp counts the models by enumerating them.
\dynasp{\cdot} solved each instance within at most 270 seconds (on
average 67 seconds). The best configuration with respect to runtime
was \algo{PRIM}. However, the running times of the different
configurations were close. We observed as expected a sub-polynomial
growth in the runtime with an increasing number of solutions.
\sharpsat timed out on 3 instances and ran into a memory out on 7
instances, but solved most of the instances quite fast. Half of the
instances were solved within 1 seconds and more than 80\% of the
instances within 10 seconds, and about 9\% of the instances took more
than 100 seconds. The number of solutions does not have an impact on
the runtime of \sharpsat.
\sharpsat was the fastest solver in total, however, \dynasp{\cdot}
solved all instances. The results are illustrated in the two left
graphs of Figure~\ref{fig:random}.

\pname{2QBF-TGrid} and \pname{2ASP-TGrid}: Clasp solved more than half
of the instances in less than 1 second, however, timed out on 59
instances. \depqbfz shows a similar behavior as Clasp, which is not
surprising as both solvers count the number of solutions by
enumerating them and hence the number of solutions has a significant
impact on the runtime of the solver. However, Clasp is throughout
faster than \depqbfz.
%
DynASP(\algo{INC}) solved half of the instances within less than 1
second, about 92\% of the instances within less than 10 seconds, and
provided solutions also if the instance had a large number of answer
sets. DynASP(\algo{INVPRIM}) and DynASP(\algo{PRIM}) quickly produced
timeouts due to a significantly larger width of the computed
decompositions.

Structured instances: 
Clasp solved most of the structured instances reasonably
fast. However, the number of solutions has again, similar to the
random setting, a significant impact on its performance. If the
instance has a small number of solutions, then Clasp yields the number
almost instantly. If the number of solutions was very high, then Clasp
timed out.
\dynasp{\cdot} solved for each set but the set~\pname{St} more than
80\% of the instances in less than 1 second and the remaining
instances in less than 100 seconds. For \pname{St} the situation was
different. Half of the instances were solved in less than 10 seconds
and a little less than the other half timed out.  Similar to the
random setting, \dynasp{\cdot} ran still fast on instances with a
large number of solutions. Whenever the instance had relatively few
solutions Clasp was faster, otherwise \dynasp{\cdot} (e.g.,
\pname{sVc}) was faster.


\newcommand{\blahtab}[1]{{\tiny{#1}}}
\begin{table}[t]
  \centering
  \small
  \begin{tabular}{@{}l@{\hspace{0.1em}}|@{\hspace{0.2em}}
	r@{\hspace{0.2em}}r@{\hspace{0.3em}}
	r@{}			  r@{\hspace{0.3em}}
	r@{\hspace{0.2em}}r@{\hspace{0.3em}}
	r@{\hspace{0.2em}}r@{\hspace{0.3em}}
	r@{\hspace{0.2em}}r@{\hspace{0.3em}}
	r@{\hspace{0.2em}}r@{}}
    \toprule 
            & \multicolumn{2}{c}{\pname{2Col}} & 
					\multicolumn{2}{c}{\pname{3Col}} & 
					\multicolumn{2}{c}{\pname{Ds}} & 
					\multicolumn{2}{c}{\pname{St}} & 
					\multicolumn{2}{c}{\pname{cVc}} & 
					\multicolumn{2}{c}{\pname{sVc}}\\ 
    \midrule
    {\blahtab{Clasp}} &  31.72 & (21) & 0.10 &(0)  & 8.99 & (3) & 0.21 & (0)  & 29.88 & (21) & 98.34 & (71) \\
    \blahtab{INC} & 1.43 &(0) & 0.58 &(0) & 0.54 & (0) & 115.02 & (498) & 0.68 & (0) & 0.78 & (0) \\
    \blahtab{INVP} & 1.50 &(0) & 0.47 &(0) & 0.79 & (0) & 91.92 & (248) & 0.99 & (0) & 1.15 & (0) \\
    \blahtab{PRIM} & 1.54 &(0) & 0.53 &(0) & 0.68 & (0) & 79.36 & (221) & 0.99 & (0) & 1.30 & (0)\\
    \bottomrule
  \end{tabular}\vspace{1em}
  \caption{Runtime results on real-world graph instance sets and number of timeouts in brackets. Runtimes given in sec.}
  \label{tab:real_world}
\end{table}    

The empirical results of the benchmarks confirm that our DynASP prototype
works reasonably fast under the assumption that the input
instance has small treewidth. The comparison to state-of-the-art \ASP
and \QBF solvers shows that our solver has an advantage if we have to
count many solutions, whereas Clasp and \depqbfz perform well if the
number of instances is relatively small. However, \dynasp{\cdot} is still
reasonably fast on structured instances with few solutions as it
yields the number of solutions mostly within less than 10 seconds.  We
observed that \dynasp{\algo{INC}} seems to be the overall best
solving algorithm in our setting, which indicates that the smaller treewidth
obtained by decomposing the incidence graph generally outweighs the benefits of
simpler solving algorithms for the primal graph.
A comparison to recent \#SAT solvers
suggests that dedicated \#SAT algorithms are somewhat faster on random
\SAT formulas of small treewidth than our decomposition based
approach, which is, however, not particularly surprising since our
implementation is equipped to handle the full ASP
semantics. 
The results indicate that our approach seem to be suitable for
practical use, at least for certain classes of instances with low
treewidth, and hence could fit into a portfolio-based solver.



\section{Conclusions}\label{sec:conclusions}

In this paper, we have presented several dynamic programming algorithms for
counting answer sets of logic programs, and compared a prototype implementation
to existing solvers. For large instances of low treewidth, our implementation
proved to be competitive both against classical ASP solvers that need to
materialize all answer sets in order to count them, as well as specific counting
algorithms developed for SAT. These promising results confirm the importance of
evaluating parameterized algorithms in practice~\cite{paracompnews:Gutin15}.
Future work includes extending our algorithms to weighted model counting, to
solve~e.g., the Bayesian inference problem.

\subsection*{Acknowledgements}
The authors gratefully acknowledge support by the Austrian Science
Fund (FWF), Grant Y698. The first author is also affiliated with the
Institute of Computer Science and Computational Science at University
of Potsdam, Germany.




\end{document}